\newtheorem{Thm}{Theorem}
\newtheorem{Cor}{Corollary}
\theoremstyle{definition}
\newtheorem{counterexample}{Counterexample}
\newcommand{\bra}[1]{{\left\langle #1 \right|}}
\newcommand{\ket}[1]{{\left| #1 \right\rangle}}
\newcommand{\T}{\mbox{$\mathrm{tr}$}}
\begin{document}
%%%%%%%%%%%%%%%%%%%%%%%%%%%%%%%%%%%%%%%%%%%%%%%%%%%%%%%%%%%%%%%%%%%%%%%%%%
%                                                                        %
%                                 Title                                  %
%                                                                        %
%%%%%%%%%%%%%%%%%%%%%%%%%%%%%%%%%%%%%%%%%%%%%%%%%%%%%%%%%%%%%%%%%%%%%%%%%%
\title{Entanglement monogamy of multipartite higher-dimensional quantum systems using convex-roof extended negativity}

\author{Jeong San Kim}
%\email{jkim@qis.ucalgary.ca}
\affiliation{
 Institute for Quantum Information Science,
 University of Calgary, Alberta T2N 1N4, Canada
}
\author{Anirban Das}
%\email{anirband@usc.edu}
\affiliation{
 Institute for Quantum Information Science, University of Calgary, Alberta T2N 1N4, Canada
}
\affiliation{
 Department of Physics and Astronomy, University of Southern California, Los Angeles, CA 90089, USA
}
\author{Barry C. Sanders}
%\email{bsanders@qis.ucalgary.ca}
\affiliation{
 Institute for Quantum Information Science,
 University of Calgary, Alberta T2N 1N4, Canada
}

\date{\today}

%%%%%%%%%%%%%%%%%%%%%%%%%%%%%%%%%%%%%%%%%%%%%%%%%%%%%%%%%%%%%%%%%%%%%%%%%%
%                                                                        %
%                              Abstract                                  %
%                                                                        %
%%%%%%%%%%%%%%%%%%%%%%%%%%%%%%%%%%%%%%%%%%%%%%%%%%%%%%%%%%%%%%%%%%%%%%%%%%
\begin{abstract}
We propose replacing concurrence by convex-roof extended negativity (CREN)
for studying monogamy of entanglement (MoE).
We show that all proven MoE relations using concurrence can be rephrased in terms of CREN.
Furthermore we show that higher-dimensional (qudit) extensions of MoE in terms
of CREN are not disproven by any of the counterexamples used to disprove qudit
extensions of MoE in terms of concurrence. We further test the CREN version of MoE
for qudits by considering fully or partially coherent  mixtures of a qudit W-class state
with the vacuum and show that the CREN version of MoE for qudits is satisfied
in this case as well. The CREN version of MoE for qudits is thus a strong
conjecture with no obvious counterexamples.
\end{abstract}

\pacs{
%03.67.Dd, % Quantum cryptography
%03.67.Hk, % Quantum communication
03.65.Ud, % Entanglement and quantum non-locality
03.67.Mn  % Entanglement production, characterization and manipulation
}
%\keywords{}
\maketitle

%%%%%%%%%%%%%%%%%%%%%%%%%%%%%%%%%%%%%%%%%%%%%%%%%%%%%%%%%%%%%%%%%%%%%
%%                                                                %%%
%%                         Introduction                           %%%
%%                                                                %%%
%%%%%%%%%%%%%%%%%%%%%%%%%%%%%%%%%%%%%%%%%%%%%%%%%%%%%%%%%%%%%%%%%%%%%
\section{Introduction}
Quantum entanglement is a resource with various applications such
as quantum teleportation and quantum key distribution in the field
of quantum information and quantum computation~\cite{tele, qkd1, qkd2}.
Whereas entanglement in bipartite quantum systems has been
intensively studied with rich understanding, the situation becomes
far more difficult for the case of multipartite quantum systems, and
very few are known for its characterization and quantification.
One important property to characterize multipartite entanglement is
known as {\em monogamy of entanglement} (MoE)~\cite{ckw}, which says that
multipartite entanglements cannot be freely shared among the parties.

MoE is a key ingredient to make quantum
cryptography secure~\cite{rg}, and it also plays an important role
%For example, if two distant parties are
%maximally entangled with each other, then they can be assured
%their relation is secure against any other party including a
%potential eavesdropper due to the monogamy property of
%entanglement.
in condensed-matter physics such as the frustration effects observed
in Heisenberg antiferromagnets and the $N$-representability problem for
fermions~\cite{anti, d, h}.
%Because a perfect ground state for an antiferromagnet tends to
%distribute its entanglements of its neighboring spins to be optimal,
%
%%Because a perfect ground state for
%%an antiferromagnet tends to be a singlet between all interacting
%%spins, a particle tries to distribute its entanglements with its
%%neighbors to be optimal, that is, a strongly correlated ground
%%state.
%
%this behavior can have a clear mathematical interpretation with quantitative statements in terms
%of MoE. Similarly, the $N$-representability problem for
%fermions~\cite{anti} also can have an analogous interpretation of
%quantifying the distributed entanglement in multipartite systems
%by means of MoE.
Thus, it is an important and necessary task to
characterize MoE to understand the whole picture of quantum entanglement
in multipartite systems, as well as its possible applications in
quantum information theory.

Although MoE is a typical property of multipartite quantum entanglement,
it is however about the relation of bipartite entanglements among the parties in multipartite systems.
In other words, it is inevitable and even crucial to have a proper way of quantifying bipartite
entanglement for a good description of the monogamy nature in multipartite quantum systems.
Thus, the following criteria must be satisfied for a proper choice of an entanglement measure.
\begin{itemize}
\item[(i)] {\em Monotonicity}: describes the non-local
character of quantum entanglement, that is, the amount of
entanglement is not increased under LOCC.

\item[(ii)] {\em Separability}: capability
of distinguishing entanglement from separability.
\end{itemize}

\begin{itemize}
\item[(iii)] {\em Monogamy}: upper bound on a sum of bipartite entanglement
measures thereby showing that bipartite sharing of entanglement is bounded.
\end{itemize}

There are several possibilities for such a measure, including
definitively answering whether the state is entangled or
separable, indicating definitively that the state is entangled but
inconclusive when the result is `separable' as well as the reverse
case, and stating whether the state is entangled and/or separable
with bounded error.

However, there are only a few measures known so far which can show the monogamy property
of entanglement in multipartite systems, and their results are restricted to
multi-qubit systems~\cite{ckw, ov}.
In other words, there exist quantum states in higher-dimensional systems~\cite{ou, ks}
which violate the monogamy properties in terms of the proposed entanglement measures,
and this exposes the importance of choosing a proper entanglement measure.

Here we propose the {\em convex-roof extended negativity}
(CREN)~\cite{LCOK} as a powerful candidate for the criteria above.
Besides its monotonicity and separability criteria, we claim that CREN is a
good alternative for MoE without any known example violating its monogamy property
even in higher-dimensional systems.
We show that any monogamy inequality of entanglement for multi-qubit systems
using concurrence~\cite{ww} can be rephrased by CREN, and this CREN MoE is
also true for the counterexamples of concurrence in higher-dimensional systems~\cite{ou, ks}.

As the first step toward general CREN MoE studies in higher-dimensional quantum systems,
we propose a class of quantum states in $n$-qudit systems consisting of partially
coherent superpositions of a generalized W-class state~\cite{ks} and the
vacuum, $\ket{0}^{\otimes n}$, and show that this class
saturates CREN MoE for any arbitrary partition of the set of subsystems.
We also show that the CREN value of the proposed class and its dual, {\em CREN of Assistance} (CRENoA)
coincide, and they are not affected by the degree of coherency in the superposition.
This is particularly important because the saturation of monogamy
relation implies that this class of multipartite higher-dimensional entanglement
can have a complete characterization by means of its partial entanglements,
and the characterization is not even affected by its decoherency.

The paper is organized as follows. In Sec.~\ref{Sec:
Concurrence and CREN}, we reprise the definitions of concurrence,
CREN, and their overlap for the case of pure states with
Schmidt rank 2, as well as two-qubit mixed states. In
Sec.~\ref{Subsec: Monogamy inequalities for $n$-qubit systems
in terms of CREN}, we rephrase all the monogamy inequalities of
entanglement for $n$-qubit systems in terms of CREN.
In Sec.~\ref{Subsec: CREN Monogamy Relations in the
Counterexamples of CKW inequality}, we show
that the counterexamples in higher-dimensional
quantum systems to the monogamy inequality using concurrence
still have a monogamy relation in terms of CREN.
In Sec.~\ref{Sec: Partially Coherent Superposition}, a class
of quantum states in $n$-qudit systems consisting of partially
coherent superpositions of a generalized W-class state and $\ket{0}^{\otimes n}$
is proposed with its CREN monogamy relation of entanglement.
In Sec.~\ref{Sec: Conclusion}, we summarize our results.

%%%%%%%%%%%%%%%%%%%%%%%%%%%%%%%%%%%%%%%%%%%%%%%%%%%%%%%%%%%%%%%%%%%%%
%%                                                                %%%
%%       Concurrence and Convex-Roof Extended Negativity          %%%
%%                                                                %%%
%%%%%%%%%%%%%%%%%%%%%%%%%%%%%%%%%%%%%%%%%%%%%%%%%%%%%%%%%%%%%%%%%%%%%

\section{Concurrence and Convex-Roof Extended Negativity}
\label{Sec: Concurrence and CREN}

For any bipartite pure state $\ket \phi_{AB}$ in a $d\otimes d'$
($d\le d'$) quantum system, its {\em concurrence}, $\mathcal{C}(\ket
\phi_{AB})$ is defined as~\cite{ww}
\begin{equation}
\mathcal{C}(\ket \phi_{AB})=\sqrt{2(1-\T\rho^2_A)},
\label{pure state concurrence}
\end{equation}
where $\rho_A=\T_B(\ket \phi_{AB}\bra \phi)$.
For any mixed state $\rho_{AB}$, it is defined as
\begin{equation}
\mathcal{C}(\rho_{AB})=\min \sum_k p_k \mathcal{C}({\ket {\phi_k}}_{AB}),
\label{mixed state concurrence}
\end{equation}
where the minimum is take over all possible pure state
decompositions, $\rho_{AB}=\sum_kp_k{\ket {\phi_k}}_{AB}\bra
\phi_k$.

{\em Concurrence of Assistance} (CoA)~\cite{lve}, which can be considered to be dual to
concurrence, is defined as
\begin{equation}
\mathcal{C}^a(\rho_{AB})=\max \sum_k p_k \mathcal{C}({\ket {\phi_k}}_{AB}),
\label{CoA}
\end{equation}
where the maximum is taken over all possible pure state decompositions of
$\rho_{AB}$.

Another well-known quantification of bipartite entanglement is
the {\em negativity}~\cite{VidalW, LCOK}, which is based on
the {\em positive partial transposition} (PPT) criterion~\cite{Peres,
Horodeckis1}.
For a bipartite pure state $\ket{\phi}_{AB}$ in a $d\otimes d'$
($d\le d'$) quantum system with the Schmidt decomposition,
\begin{equation}
\ket{\phi}_{AB}~=~\sum_{i=0}^{d-1} \sqrt{\lambda_{i}}\ket{ii},~~\lambda_{i} \geq 0,~\sum_{i=0}^{d-1}\lambda_{i}~=1,
\label{Schmidt}
\end{equation}
(without loss of generality, the Schmidt basis is taken to be the
standard basis), the partial transposition of $\ket{\phi}_{AB}$ is
\begin{align}
\ket{\phi}\bra{\phi}^{T_B} =& \sum_{i,j=0}^{d-1}\sqrt{\lambda_{i}\lambda_{j}}\ket{ij}\bra{ji}\nonumber\\
%&=& \sum_{i=1}^{d} a_{i}^{2}\ket{ii}\bra{ii} + \sum_{i\neq
%j}a_{i}a_{j}\ket{ij}\bra{ji}\\
=& \sum_{i=0}^{d-1} \lambda_{i}\ket{ii}\bra{ii} +
\sum_{i<j} \sqrt{\lambda_{i}\lambda_{j}}(\ket{ij}\bra{ji}+\ket{ji}\bra{ij}).
\label{eq:pt state}
\end{align}
Thus, the negative eigenvalues can be $- \lambda_{i}\lambda_{j}$
for $i < j$ with corresponding eigenvectors
$\ket{\psi_{ij}}~=~\frac{1}{\sqrt{2}}(\ket{ij}-\ket{ji})$, and the
negativity $\mathcal{N}$ of $\ket{\phi}_{AB}$ is defined
as~\cite{Negativity}
\begin{align}
\mathcal{N}(\ket{\phi})&=\left\|\ket{\phi}\bra{\phi}^{T_B}\right\|_1-1 \nonumber\\
&=2\sum_{i<j}\sqrt{\lambda_{i}\lambda_{j}},
\label{eq:pure_negativity}
\end{align}
where $\left\|\cdot\right\|_1$ is the trace norm.

Based on the reduced density matrix of $\ket{\phi}_{AB}$,
we can have an alternative definition of negativity,
\begin{align}
\mathcal{N}(\ket{\phi})&=2\sum_{i<j}\sqrt{\lambda_{i}\lambda_{j}}\nonumber\\
&=(\T{ \sqrt{\rho_A}})^2 -1,
\label{eq:pure_negativity2}
\end{align}
where $\rho_A = \T_B{\ket{\phi}_{AB}\bra{\phi}}$.

We note that $\mathcal{N}(\ket{\phi})=0$ if and only if
$\ket{\psi}$ is separable, and it can attain its maximal value,
$d-1$, for a $d\otimes d$ maximally entangled state,
\begin{equation}
\ket{\phi}=\frac{1}{\sqrt d}\sum_{i=0}^{d-1}\ket{ii}.
\label{edit}
\end{equation}
(One can easily check this by the {\em Lagrange multiplier}.)
%Thus $\mathcal{N}$ can be a measure of entanglement
%for bipartite pure states in any dimensional quantum system.

For a mixed state $\rho_{AB}$, its negativity is defined as
\begin{equation}
\mathcal{N}(\rho_{AB})=\left\|{\rho_{AB}}^{T_B}\right\|_1-1,\label{eq:negativity}
\end{equation}
where $\rho^{T_B}$ is the partial transpose of $\rho_{AB}$.

It is known that PPT gives a separability criterion for two-qubit systems, and it is
also a necessary and sufficient condition for nondistillability in
$2\otimes n$ quantum system \cite{Horodecki1,DCLB}. However, in
higher-dimensional quantum systems rather than $2\otimes 2$ and
$2\otimes 3$ quantum systems, there exist mixed entangled states with
PPT, so-called `bound entangled
states'~\cite{Horodecki1,Horodeckis2}. For this case, negativity
cannot distinguish PPT bound entangled states from separable
states, and thus, negativity itself is not sufficient to be a good
measure of entanglement even in a $2\otimes n$ quantum system.

One modification of negativity to overcome its lack of
separability criterion is CREN~\cite{unnomalization}, which gives a perfect
discrimination of PPT bound entangled states and separable states
in any bipartite quantum system.

For a bipartite mixed state mixed state $\rho_{AB}$, CREN is defined as
\begin{equation}
\mathcal{N}_c(\rho) \equiv\min \sum_k p_k
\mathcal{N} (\ket{\phi}_k), \label{eq:c_negativity}
\end{equation}
where the minimum is taken over all possible pure state
decompositions of $\rho={\sum_k p_k \ket{\phi_k}\bra{\phi_k}}$.

Whereas a normalized version of the negativity depending on the dimension of the
quantum systems was used to show its monotonicity~\cite{LCOK}, it can be
analogously shown with the definitions in Eqs.~(\ref{eq:negativity}) and (\ref{eq:c_negativity}).

Now, let us consider the relation between CREN and concurrence.
For any bipartite pure state $\ket{\phi}_{AB}$ in a $d \otimes d'$
quantum system with Schmidt rank 2,
\begin{equation}
\ket{\phi}=\sqrt{\lambda_0}\ket{00}+\sqrt{\lambda_1}\ket{11},
\label{schmidt2}
\end{equation}
we have
\begin{align}
\mathcal{N}(\ket{\phi})&=\left\|\ket{\phi}\bra{\phi}^{T_B}\right\|_1-1 \nonumber\\
&=2\sqrt{\lambda_0 \lambda_1}\nonumber\\
&=\sqrt{2(1-\T{\rho_A^2})}\nonumber\\
&=\mathcal{C}(\ket{\phi}),
\label{negativityC}
\end{align}
where $\rho_A = \T_B (\ket{\phi}\bra{\phi})$. In other words,
negativity is equivalent to concurrence for any pure state with
Schmidt rank 2, and consequently it follows that for any 2-qubit
mixed state $\rho_{AB}=\sum_{i}p_{i}\ket{\phi_i}\bra{\phi_i}$,
\begin{align}
\mathcal{N}_{c}(\rho_{AB})=& \min \sum_{i}p_{i}\mathcal{N} (\ket{\phi_i})\nonumber\\
=&\min \sum_{i}p_{i}\mathcal{C} (\ket{\phi_i})\nonumber\\
=&\mathcal{C}(\rho_{AB}),
\label{CRENC}
\end{align}
where the minima are taken over all pure state decompositions of $\rho_{AB}$.

Similar to the duality between concurrence and CoA, we can also
define a dual to CREN, namely CRENoA,
by taking the maximum value of average negativity over all
possible pure state decomposition. Furthermore, for a two-qubit
system, we have
\begin{align}
\mathcal{N}_{c}^{a}(\rho_{AB})=& \max \sum_{i}p_{i}\mathcal{N} (\ket{\phi_i})\nonumber\\
=&\max \sum_{i}p_{i}\mathcal{C} (\ket{\phi_i})\nonumber\\
=&\mathcal{C}^{a}(\rho_{AB}),
\label{CRENCoA}
\end{align}
where maxima are taken over all pure state decompositions of
$\rho_{AB}$ and $\mathcal{N}_{c}^{a}(\rho_{AB})$ is the CRENoA of
$\rho_{AB}$.

From the analysis of CREN and CRENoA, we can see that CREN can be
considered as a generalized version of concurrence from 2-qubit
systems. Thus, having the monotonicity and separability criteria of
CREN, it is natural to investigate MoE
in terms of CREN for multi-qubit systems and possible higher-dimensional quantum systems.
%%%%%%%%%%%%%%%%%%%%%%%%%%%%%%%%%%%%%%%%%%%%%%%%%%%%%%%%%%%%%%%%%%%%%
%%                                                                %%%
%%                  CREN monogamy of entanglement                 %%%
%%                                                                %%%
%%%%%%%%%%%%%%%%%%%%%%%%%%%%%%%%%%%%%%%%%%%%%%%%%%%%%%%%%%%%%%%%%%%%%
\section{CREN Monogamy of Entanglement}
\label{Sec: CREN Monogamy of Entanglement}

In three-qubit systems, Coffman, Kundu and Wootters (CKW)~\cite{ckw}
first introduced a monogamy inequality in terms of concurrence, as
\begin{equation}
\mathcal{C}_{A(BC)}^2\ge\mathcal{C}_{AB}^2+\mathcal{C}_{AC}^2,
\label{CKW}
\end{equation}
where $\mathcal{C}_{A(BC)}=\mathcal{C}(\ket{\psi}_{A(BC)})$ is the concurrence of a 3-qubit state $\ket{\psi}_{A(BC)}$ for
a bipartite cut of subsystems between $A$ and $BC$ and $\mathcal{C}_{AB}=\mathcal{C}(\rho_{AB})$.
Similarly, its dual inequality in terms of CoA,
%which can be interpreted as  the maximum of the average of concurrence taken over all possible pure state decompositions
%of the mixed state,
\begin{equation}
\mathcal{C}_{A(BC)}^2\le(\mathcal{C}_{AB}^a)^2+(\mathcal{C}_{AC}^a)^2,
\label{dual}
\end{equation}
has been shown in~\cite{gms}.
Later, the CKW inequality has been generalized into $n$-qubit systems~\cite{ov},
and its dual inequality for $n$-qubit systems has also been introduced~\cite{gbs}.

However, a quantum state in a $3 \otimes 3\otimes 3$ quantum system was found that violates
the CKW inequality~\cite{ou}, and recently another counterexample was found in
a $3 \otimes 2\otimes 2$ quantum system~\cite{ks}; therefore
the CKW inequality only holds for multi-qubit systems, and even a tiny extension in any of the subsystems leads
to a violation.

In this section, we show that all the monogamy inequalities for qubits using concurrence
can be rephrased by CREN, and this CREN monogamy inequality is still true for
the counterexamples in~\cite{ou, ks}.

\subsection{Monogamy Inequalities for $n$-qubit systems in terms of CREN}
\label{Subsec: Monogamy inequalities for $n$-qubit systems in terms of CREN}

For any pure state $\ket{\psi}_{A_1 \cdots A_n}$ in an $n$-qubit
system $A_1 \otimes \cdots \otimes A_n$ where $A_i \cong
\mathbb{C}^2$ for $i=1,\ldots,n$, a generalization of the CKW
inequality,
\begin{equation}
\mathcal{C}_{A_1 (A_2 \cdots A_n)}^2  \geq  \mathcal{C}_{A_1 A_2}^2 +\cdots+\mathcal{C}_{A_1 A_n}^2,
\label{nCKW}
\end{equation}
was conjectured~\cite{ckw} and proved~\cite{ov}.
 Another inequality, which can be
considered to be dual to Eq.~(\ref{nCKW}) was
also introduced in~\cite{gbs},
\begin{equation}
\mathcal{C}_{A_1 (A_2 \cdots A_n)}^2  \leq  (\mathcal{C}^a_{A_1 A_2})^2 +\cdots+(\mathcal{C}^a_{A_1 A_n})^2 .
\label{ndual}
\end{equation}

Now, let us consider these inequalities in terms of CREN. First,
note that any $n$-qubit pure state $\ket{\psi}_{A_1 \cdots A_n}$
can have a Schmidt decomposition with at most two non-zero Schmidt
coefficients with respect to the bipartite cut between $A_1$ and
the others. Thus, by Eq.~(\ref{negativityC}), we have
\begin{equation}
\mathcal{C}_{A_1 (A_2 \cdots A_n)} = {\mathcal{N}_{c}}_{A_1 (A_2 \cdots A_n)}.
\label{ncrenC}
\end{equation}
Furthermore, for any reduced density matrix $\rho_{A_i A_j}$ of $\ket{\psi}_{A_1 \cdots A_n}$ onto two-qubit
subsystems $A_i \otimes A_j$,
%with $i,~j \in \{1,\ldots, n \},~i\neq j$,
it is a two-qubit mixed state; therefore, by Eqs.~(\ref{CRENC}) and (\ref{CRENCoA}), we have
\begin{equation}
\mathcal{C}_{A_i A_j} = {\mathcal{N}_c}_{A_i A_j },~\mathcal{C}^a_{A_i A_j} = {\mathcal{N}_{c}}^a_{A_i A_j },
\label{ncrenC2}
\end{equation}
for $i, j \in \{1,\cdots, n \},~ i\neq j$.

Thus, we have the following theorem.
\begin{Thm}
For any $n$-qubit pure state $\ket{\psi}_{A_1 \cdots A_n}$,
\begin{equation}
{{{\mathcal{N}}_c}_{A_1 (A_2 \cdots A_n)}}^2  \geq
{{\mathcal{N}_c}_{A_1 A_2}}^2 +\cdots+{{\mathcal{N}_c}_{A_1
A_n}}^2, \label{nineq cren}
\end{equation}
and
\begin{equation}
{{{\mathcal{N}}_c}_{A_1 (A_2 \cdots A_n)}}^2  \leq
({\mathcal{N}_c}^a_{A_1 A_2})^2 +\cdots+({\mathcal{N}_c}^a_{A_1
A_n})^2, \label{nineq cren a}
\end{equation}
where ${{\mathcal{N}}_c}_{A_1 (A_2 \cdots
A_n)}={\mathcal{N}}(\ket{\psi}_{A_1(A_2 \cdots A_n)})$,
% with respect to the bipartite cut between $A_1$ and the others,
${\mathcal{N}_c}_{A_1 A_i}={\mathcal{N}_c}(\rho_{A_1 A_i})$ and
${\mathcal{N}_c}^a_{A_1 A_i}={\mathcal{N}_c}^a(\rho_{A_1 A_i})$
for $i=2, \ldots, n$.
\label{CRENmono}
\end{Thm}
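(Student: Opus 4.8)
The plan is to obtain both CREN inequalities as direct transcriptions of the concurrence monogamy relations (\ref{nCKW}) and (\ref{ndual}), which are already established, by invoking the term-by-term equalities between CREN and concurrence that hold in precisely the dimensions occurring here. The only genuine content beyond the cited concurrence results is the justification of those equalities, so I would organize the argument around establishing two facts and then substituting.

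First I would treat the left-hand side. The quantity ${\mathcal{N}_c}_{A_1(A_2\cdots A_n)}$ is evaluated on the bipartite \emph{pure} state $\ket{\psi}$ across the cut $A_1$ versus $A_2\cdots A_n$, so its convex roof is trivial and it equals the ordinary pure-state negativity $\mathcal{N}(\ket{\psi}_{A_1(A_2\cdots A_n)})$. Because $A_1$ is a single qubit, the Schmidt rank of this cut is at most $\dim A_1 = 2$, independently of how large the complementary system $A_2\cdots A_n$ is; hence Eq.~(\ref{negativityC}) applies and yields the identity (\ref{ncrenC}), $\mathcal{C}_{A_1(A_2\cdots A_n)} = {\mathcal{N}_c}_{A_1(A_2\cdots A_n)}$. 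Next I would treat the right-hand terms: each two-party marginal $\rho_{A_1 A_i}$ is a two-qubit mixed state, so Eqs.~(\ref{CRENC}) and (\ref{CRENCoA}) give the identities (\ref{ncrenC2}), namely $\mathcal{C}_{A_1 A_i} = {\mathcal{N}_c}_{A_1 A_i}$ and $\mathcal{C}^a_{A_1 A_i} = {\mathcal{N}_c}^a_{A_1 A_i}$ for each $i = 2,\ldots,n$. With these equalities in hand, squaring and substituting into (\ref{nCKW}) and (\ref{ndual}) term by term immediately produces (\ref{nineq cren}) and (\ref{nineq cren a}).

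The main obstacle is not the final substitution, which is purely formal, but the scope of the equalities it relies on --- in particular the claim that \emph{every} $n$-qubit pure state has Schmidt rank at most $2$ across the $A_1$-versus-rest cut. I would emphasize that this holds regardless of the entanglement structure among $A_2,\ldots,A_n$, because the Schmidt rank is capped by the smaller local dimension, here that of the single qubit $A_1$; it is this dimensional bottleneck, and nothing about the global state, that licenses the pure-state equivalence of negativity and concurrence in Eq.~(\ref{negativityC}). Once this observation and the two-qubit equalities of Eqs.~(\ref{CRENC}) and (\ref{CRENCoA}) are secured, the theorem follows with no further analytic work, the entire burden having been discharged by the known concurrence inequalities.
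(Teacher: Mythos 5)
Your proposal is correct and follows essentially the same route as the paper: identify CREN with concurrence on the $A_1$-versus-rest cut via the Schmidt-rank-$\leq 2$ observation (Eq.~(\ref{ncrenC})), identify the two-qubit marginal quantities via Eqs.~(\ref{CRENC}) and (\ref{CRENCoA}) (i.e.\ Eq.~(\ref{ncrenC2})), and substitute into the known concurrence inequalities (\ref{nCKW}) and (\ref{ndual}). Your explicit emphasis on the dimensional bottleneck at $A_1$ is exactly the justification the paper relies on, stated slightly more carefully.
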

\begin{proof}
It is a direct consequence from the overlap of CREN and concurrence
in Eqs.~(\ref{ncrenC}) and (\ref{ncrenC2}), as well as  the monogamy inequalities in
Eqs.~(\ref{nCKW}) and (\ref{ndual}) by concurrence.
\end{proof}

%In other words, for every multi-qubit monogamy inequality so far,
%Eqs.~(\ref{nCKW}) and (\ref{ndual}) can be rephrased in terms
%of CREN in a exactly same way as  Eqs.~(\ref{nineq
%cren}) and (\ref{nineq cren a}).

In~\cite{of}, another monogamy inequality of entanglement for
three-qubit systems in terms of the original
negativity~\cite{VidalW} was proposed. For a three-qubit state
$\ket{\psi}_{ABC}$, it was shown that
\begin{equation}
{\mathcal{N}_{A(BC)}}^2\geq {\mathcal{N}_{AB}}^2+{\mathcal{N}_{AC}}^2,
\label{3nega}
\end{equation}
where ${\mathcal{N}_{AB}}^2=\|\rho_{AB}^{T_B}\|_1-1 $ and
${\mathcal{N}_{AC}}^2=\|\rho_{AC}^{T_C}\|_1-1 $ are the original
negativities of $\rho_{AB}$ and $\rho_{AC}$ respectively.

Due to the convexity of the original negativity, we can easily see
that CREN is always an upper bound of the original negativity. In
other words, for any bipartite mixed state $\rho_{AB}$,
\begin{equation}
\mathcal{N}_c(\rho_{AB}) \geq \mathcal{N}(\rho_{AB}).
\label{upperbound}
\end{equation}
From Theorem~\ref{CRENmono} together with Eq.~(\ref{upperbound}), we have the following corollary
which encapsulates the result of Eq.~(\ref{3nega}).
\begin{Cor}
For any $n$-qubit pure state $\ket{\psi}_{A_1 \cdots A_n}$,
\begin{equation}
{{\mathcal{N}}_{A_1 (A_2 \cdots A_n)}}^2
%\geq & {{\mathcal{N}_c}_{A_1 A_2}}^2 +\cdots+{{\mathcal{N}_c}_{A_1 A_n}}^2\nonumber\\
\geq  {{\mathcal{N}}_{A_1 A_2}}^2 +\cdots+{{\mathcal{N}}_{A_1
A_n}}^2. \label{original n ineq}
\end{equation}
\end{Cor}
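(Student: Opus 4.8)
The statement to prove is the chain of inequalities for the original negativity,
\begin{equation}
{{\mathcal{N}}_{A_1 (A_2 \cdots A_n)}}^2 \geq {{\mathcal{N}}_{A_1 A_2}}^2 +\cdots+{{\mathcal{N}}_{A_1 A_n}}^2,
\end{equation}
for an arbitrary $n$-qubit pure state. The plan is to sandwich this between the CREN monogamy relation of Theorem~\ref{CRENmono} and the elementary bound of Eq.~(\ref{upperbound}).

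First I would recall the left-hand equality. Since $\ket{\psi}_{A_1\cdots A_n}$ is an $n$-qubit state, the reduced state on $A_1$ has rank at most $2$, so the bipartite cut $A_1|(A_2\cdots A_n)$ has Schmidt rank at most $2$. By Eq.~(\ref{negativityC}) negativity equals concurrence in the Schmidt-rank-$2$ case, and more to the point the pure-state negativity across this cut coincides with its CREN value, so ${\mathcal{N}}_{A_1(A_2\cdots A_n)} = {\mathcal{N}_c}_{A_1(A_2\cdots A_n)}$. Hence the left-hand side of the target inequality is literally the left-hand side of Eq.~(\ref{nineq cren}).

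Next I would attack the right-hand side. For each two-qubit reduced state $\rho_{A_1 A_i}$, Eq.~(\ref{upperbound}) gives ${\mathcal{N}_c}(\rho_{A_1 A_i}) \geq {\mathcal{N}}(\rho_{A_1 A_i}) \geq 0$. Squaring (all quantities are nonnegative) and summing over $i=2,\ldots,n$ yields
\begin{equation}
\sum_{i=2}^{n}{{\mathcal{N}_c}_{A_1 A_i}}^2 \;\geq\; \sum_{i=2}^{n}{{\mathcal{N}}_{A_1 A_i}}^2.
\end{equation}
Combining this with Theorem~\ref{CRENmono}, namely ${\mathcal{N}_c}_{A_1(A_2\cdots A_n)}^2 \geq \sum_{i=2}^{n}{{\mathcal{N}_c}_{A_1 A_i}}^2$, and the pure-state identity from the previous step gives the stated chain directly.

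The argument is essentially a two-step sandwiching, so there is no serious obstacle; the only point demanding a little care is the direction of the monotonicity in Eq.~(\ref{upperbound}). Because CREN bounds the original negativity \emph{from above}, it can only be used to lower-bound the \emph{sum} of squared two-qubit negativities, which is exactly the side of the inequality where a lower bound is wanted. I would make sure the inequalities point consistently: the CREN relation supplies the upper object, Eq.~(\ref{upperbound}) relaxes each two-qubit CREN term down to the bare negativity, and squaring preserves the order because every negativity is nonnegative.
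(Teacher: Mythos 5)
Your proof is correct and is essentially the paper's own argument: the paper deduces the corollary directly from Theorem~\ref{CRENmono} together with Eq.~(\ref{upperbound}), exactly the two-step sandwich you describe (for the pure state, CREN equals negativity across the cut $A_1|(A_2\cdots A_n)$, and then each two-qubit CREN term on the right-hand side is relaxed down to the bare negativity). One caution on your closing remark only: Eq.~(\ref{upperbound}) is used to bound the sum of squared two-qubit negativities from \emph{above} by the corresponding CREN sum --- equivalently, the bare-negativity sum is a \emph{lower} bound for the CREN sum --- which is indeed the direction your displayed inequalities use, even though your sentence ``it can only be used to lower-bound the sum of squared two-qubit negativities'' states this backwards.
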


Thus, besides concurrence, CREN is another good entanglement
measure in multi-qubit systems for MoE.

\subsection{CREN vs Concurrence-based Monogamy Relations}
\label{Subsec: CREN Monogamy Relations in the Counterexamples of CKW inequality}

Two counterexamples in~\cite{ou, ks} are, in fact, all known
counterexamples showing the violation of the CKW inequality in
higher-dimensional quantum systems. Here we show that they still
have a monogamy relation in terms of CREN.

\begin{counterexample}\label{Ex:1}
(Ou~\cite{ou})

Let us consider a pure state $\ket{\psi}$ in $3 \otimes 3\otimes
3$ quantum systems such that
\begin{align}  \label{f}
|\psi\rangle_{ABC}=\frac{1}{\sqrt{6}}(&|123\rangle-|132\rangle+|231\rangle\nonumber\\
&-|213\rangle+|312\rangle-|321\rangle).
\end{align}

Since $|\psi\rangle_{ABC}$ is pure, it is easy to check
$C^{2}_{A(BC)}=\frac{4}{3}$. For mixed states $\rho_{AB}$ and
$\rho_{AC}$, it was shown that any pure state in any pure state
ensemble has the same constant value, 1, as its concurrence, which
implies $C^{2}_{AB}=C^{2}_{AC}=1$. Therefore we have
\begin{equation}  \label{i}
C_{AB}^{2}+C_{AC}^{2}=2\geq \frac{4}{3}=C_{A(BC)}^{2},
\end{equation}
which is a violation of the CKW inequality in higher-dimensional
quantum systems.

Now, let us consider the case of using CREN as the entanglement
measure for the state in Eq.~(\ref{f}).

Since $|\psi\rangle_{ABC}$ is pure, it can be easily checked that
\begin{equation}
\mathcal{N}_{A(BC)}={\mathcal{N}_c}_{A(BC)}=(\T{ \sqrt{\rho_A}})^2 -1 =2.
\label{counterN123}
\end{equation}

For ${\mathcal{N}_c}_{AB}$, let us consider $\rho_{AB}$ whose
spectral decomposition is,
\begin{equation}
\rho_{AB}=\frac{1}{3}\left(|x\rangle_{AB}\langle x|+|y\rangle_{AB}\langle
y|+|z\rangle_{AB}\langle z|\right),
\label{specrhoAB}
\end{equation}
where
\begin{align}
|x\rangle_{AB} =& \frac{1}{\sqrt{2}}\left(|23\rangle-|32\rangle\right),\nonumber\\
|y\rangle_{AB} =& \frac{1}{\sqrt{2}}\left(|31\rangle-|13\rangle\right),\nonumber\\
|z\rangle_{AB} =& \frac{1}{\sqrt{2}}\left(|12\rangle-|21\rangle\right).
\end{align}

By a straightforward calculation, it can be shown that for arbitrary
pure states $|\phi\rangle_{AB}=c_{1}|x%
\rangle_{AB}+c_{2}|y\rangle_{AB} + c_{3}|z\rangle_{AB}$ with $%
|c_{1}|^{2}+|c_{2}|^{2}+|c_{3}|^{2}=1$, their reduced density matrix $%
\rho_{A} = \T_{B}|\phi\rangle_{AB}\langle\phi|$ has the same
spectrum $\{\frac{1}{2}, \frac{1}{2}, 0\}$~\cite{vidal}. Thus, we have
\begin{equation}
\mathcal{N}(\ket{\phi}_{AB})=(\T{ \sqrt{\rho_A}})^2 -1 =1,
\label{same}
\end{equation}
for any $\ket{\phi}_{AB}$ that is a superposition of
$\ket{x}_{AB}$, $\ket{y}_{AB}$ and $\ket{z}_{AB}$. By the
Hughston-Jozsa-Wootters (HJW) theorem~\cite{HJW}, any pure state
in any pure state ensemble of $\rho_{AB}$ can be realized as a
superposition of $\ket{x}_{AB}$, $\ket{y}_{AB}$ and
$\ket{z}_{AB}$ thus we have
\begin{align}
{\mathcal{N}_c}(\rho_{AB})=& \min_{\sum_k p_k \ket{\phi}_k\bra{\phi_k}=\rho_{AB}} \sum_k p_k \mathcal{N} (\ket{\phi}_k)\nonumber\\
=& \frac{1}{3} \left(\mathcal{N} (\ket{x}_{AB})+\mathcal{N} (\ket{y}_{AB})+\mathcal{N} (\ket{z}_{AB})\right)\nonumber\\
=& 1.
\label{counterN12}
\end{align}
Since Eq.~(\ref{f}) is asymmetric, we also have a similar result for $\rho_{AC}$, which is
\begin{align}
{\mathcal{N}_c}(\rho_{AC})=& \min_{\sum_k p_k \ket{\phi}_k\bra{\phi_k}=\rho_{AC}} \sum_k p_k \mathcal{N} (\ket{\phi}_k)\nonumber\\
=& \frac{1}{3} \left(\mathcal{N} (\ket{x}_{AC})+\mathcal{N} (\ket{y}_{AC})+\mathcal{N} (\ket{z}_{AC})\right)\nonumber\\
=& 1.
\label{counterN13}
\end{align}
Now, from Eq.~(\ref{counterN123}) together with  Eqs.~(\ref{counterN12}) and (\ref{counterN13}), we have
\begin{equation}
{{\mathcal{N}_c}_{A(BC)}}^2 = 4 \geq  1+1={{\mathcal{N}_c}_{AB}}^2+{{\mathcal{N}_c}_{AC}}^2.
\label{nmono1}
\end{equation}
In other words, even though the state $\ket{\psi}$ in Eq.~(\ref{f})
is a counterexample of the CKW inequality in three-qutrit
systems in terms of concurrence, it still shows a monogamy
property in terms of CREN.
\end{counterexample}

\begin{counterexample}\label{Ex2}
(Kim and Sanders~\cite{ks})

Let us consider a pure state $\ket{\psi}$ in $3 \otimes 2 \otimes 2$ quantum systems such that
\begin{equation}
\ket{\psi}_{ABC} = \frac{1}{\sqrt{6}}(\sqrt{2}\ket{010}+\sqrt{2}\ket{101}+\ket{200}+\ket{211}).
\label{count2}
\end{equation}
It can be easily seen that $\mathcal {C}^2_{A(BC)}=\frac{12}{9}$ whereas $\mathcal {C}^2_{AB}=\mathcal {C}^2_{AC}=\frac{8}{9}$, which implies the violation of the CKW inequality. However, by using a similar method to the previous example, we can have ${{\mathcal{N}_c}_{A(BC)}}^2 = 4$ whereas
${{\mathcal{N}_c}_{AB}}^2 = {{\mathcal{N}_c}_{AB}}^2=\frac{8}{9}$, which implies the example in Eq.~(\ref{count2}) also shows a monogamy property in terms of CREN.
\end{counterexample}

Thus, CREN is a powerful alternative for MoE in multipartite higher-dimensional quantum systems
without any trivial counterexample so far.

%%%%%%%%%%%%%%%%%%%%%%%%%%%%%%%%%%%%%%%%%%%%%%%%%%%%%%%%%%%%%%%%%%%%%%%%%%
%                                                                        %
%                 Partially Coherent Superposition                       %
%                                                                        %
%%%%%%%%%%%%%%%%%%%%%%%%%%%%%%%%%%%%%%%%%%%%%%%%%%%%%%%%%%%%%%%%%%%%%%%%%%
\section{Partially Coherent Superposition of an $n$-qudit Generalized W-class state
and $\ket 0^{\otimes n}$}\label{Sec: Partially Coherent Superposition}

Three-qubit systems can have two inequivalent classes of genuine
tripartite entangled states by the CKW inequality~\cite{DVC}.
One of them is the Greenberger-Horne-Zeilinger (GHZ) class~\cite{GHZ}
and the other one is the W class~\cite{DVC}. These two classes show
extreme differences in terms of the CKW and its dual inequalities:
The CKW and its dual inequalities are saturated by W-class states,
whereas the terms for reduced density matrices in the inequalities
always vanish for GHZ-class states. Since the saturation of the
CKW inequality by W-class states can be interpreted as a genuine
tripartite entanglement with a complete characterization by means
of its partial entanglements, W-class states here are especially
interesting.

It was shown that there also exists a class of states in $n$-qudit
systems which saturate a monogamy relation~\cite{ks}.
By using concurrence as the entanglement
measure, the monogamy inequalities are shown to be saturated by
incoherent superpositions of a generalized $n$-qudit W-class state~\cite{ks}
and the vacuum, $\ket{0}^{\otimes n}$.

In this section, we propose a class of quantum states
in $n$-qudit systems consisting of partially coherent
superpositions of a generalized W-class state and the
vacuum, and show that they have the saturation of the monogamy relation
in terms of CREN. This saturation is also true for an arbitrary partition
of the set of subsystems, and it is not even affected by the degree of coherency.

%We believe this can be the first step for
%the research of general monogamy relation of entanglement
%in higher dimensional quantum systems in terms of CREN.

Let us reprise the definition of an $n$-qudit generalized W-class
state~\cite{ks},
\begin{align}
\left|W_n^d \right\rangle_{A_1\cdots A_n}=\sum_{i=1}^{d-1}(a_{1i}{\ket {i0\cdots 0}} +&a_{2i}{\ket {0i\cdots 0}}\nonumber\\
+&\cdots +a_{ni}{\ket {00\cdots 0i}}),\nonumber\\
\sum_{i=1}^{d-1}(|a_{1i}|^2+|a_{2i}|^2+\cdots +|a_{ni}|^2&)=1,
\label{generalized W state}
\end{align}
which is a coherent superposition of all $n$-qudit
product states with Hamming weight one.

A partially coherent superposition of a generalized W-class state
and $\ket{0}^{\otimes n}$ is given as
\begin{align}
\rho_{A_1\cdots A_n}=&p\left|W_n^d \right\rangle \left\langle W_n^d \right|+(1-p)\ket 0^{\otimes n}\bra 0^{\otimes n}\nonumber\\
&+\lambda \sqrt{p(1-p)}(|\left|W_n^d \right\rangle\bra 0^{\otimes n}+\ket 0^{\otimes n}\left\langle W_n^d \right|),
\label{partially coherent density matrix}
\end{align}
where $\lambda$ is the degree of coherence with $0 \leq \lambda
\leq 1$.
For the case that $\lambda =1$, Eq.~(\ref{partially
coherent density matrix}) becomes a coherent superposition of a
generalized W-class state and $\ket{0}^{\otimes n}$, and it is an
incoherent superposition, or a mixture when $\lambda =0$.
In other words, Eq.~(\ref{partially coherent density matrix}) is an $n$-qudit state where
the product state of Hamming weight zero is in a partially coherent superposition with all
the states of Hamming weight one.

The state in Eq.~(\ref{partially coherent density matrix}) can also be
interpreted by means of {\em decoherence}. In other words,
Eq.~(\ref{partially coherent density matrix}) can be considered as
the resulting state from a coherent superposition of a generalized
W-class state and $\ket{0}^{\otimes n}$,
\begin{equation}
\ket \psi _{A_1,\cdots A_n}=\sqrt{p} \left|W_n^d\right\rangle+\sqrt{1-p}\ket 0^{\otimes n},
\label{superposition}
\end{equation}
after some decoherence process so-called {\em phase damping}~\cite{nc},
which can be represented as
\begin{align}
\rho_{A_1\cdots A_n}&=\Lambda(\ket \psi \bra{\psi})\nonumber\\
&=E_{0}\ket{\psi}\bra{\psi}E_{0}^{\dag}+E_{1}\ket{\psi}\bra{\psi}E_{1}^{\dag}+E_{2}\ket{\psi}\bra{\psi}E_{2}^{\dag},
\label{Kraus}
\end{align}
with Kraus operators $E_{0}=\sqrt{\lambda}I$,
$E_{1}=\sqrt{1-\lambda}(I-\ket{0}\bra{0})$ and
$E_{2}=\sqrt{1-\lambda}\ket{0}\bra{0}$.

Now, we will see that the monogamy relation of the state in
Eq.~(\ref{partially coherent density matrix}) in terms of CREN is
saturated with respect to any arbitrary partition of the set of
subsystems. Furthermore, the entanglements, measured by CREN, of
the state in Eq.~(\ref{partially coherent density matrix}) and its
reduced density matrix onto any subsystem with respect to any
bipartite cut are not affected by the degree of coherency
$\lambda$.

First, let us consider the CREN of $\rho_{A_1\cdots A_n}$ in
Eq.~(\ref{partially coherent density matrix}) with respect to the
bipartite cut between $A_1$ and the others.
The state in Eq.~(\ref{partially coherent density matrix}) has a
pure state decomposition as
\begin{widetext}
\begin{align}
\rho_{A_1\cdots A_n}=&(\sqrt{p}\left|W_n^d\right\rangle+\lambda \sqrt{1-p}\ket{0}^{\otimes n})(\sqrt{p} \left\langle W_n^d \right|+\lambda \sqrt{1-p}\bra{0}^{\otimes n})\nonumber\\
&+(\sqrt{(1-p)(1-\lambda^2)}\bra 0^{\otimes n})(\sqrt{(1-p)(1-\lambda^2)}\ket 0^{\otimes n}).
%\nonumber\\
%=& |\tilde{\psi}_1\rangle \langle\tilde{\psi}_1|+|\tilde{\psi}_2\rangle \langle\tilde{\psi}_2|,\nonumber\\
\end{align}
\end{widetext}
Now, let
\begin{align}
|\tilde{\psi}_1\rangle =&\sqrt{p}\left|W_n^d\right\rangle+\lambda
\sqrt{1-p}\ket 0^{\otimes n},\nonumber\\
|\tilde{\psi}_2\rangle =&\sqrt{(1-p)(1-\lambda^2)}\ket 0^{\otimes
n}
\label{spec1}
\end{align}
 be two unnnormalized states in an $n$-qudit system.
Then, by the HJW theorem~\cite{HJW}, any other pure state decomposition of
$\rho_{A_1(A_2 \cdots A_n)}=\sum_{i=1}^{r}|\tilde{\phi}_i \rangle
\langle \tilde{\phi}_i |$ of size $r$ can be realized by the
choice of an $r\times r$ unitary matrix $(u_{ij})$ such that
$|\tilde{\phi}_i \rangle=u_{i1}|\tilde{\psi}_1 \rangle+u_{i2}
|\tilde{\psi}_2 \rangle$. In other words, with the normalization
of $|\tilde{\phi}_i \rangle=\sqrt{p_i}|{\phi}_i \rangle$, we can
consider an arbitrary pure state decomposition of $\rho_{A_1(A_2
\cdots A_n)}=\sum_{i=1}^{r}p_i |{\phi}_i \rangle \langle {\phi}_i
|$ with arbitrary size $r$.

By using the method introduced in~\cite{ks}, we can directly
evaluate the average negativity of the pure states $\ket{\phi_i}$
for an arbitrary pure state decomposition of $\rho_{A_1(A_2 \cdots
A_n)}$. After tedious but straightforward calculations, it can be
shown that the average negativity is independent from the choice
of a unitary matrix $(u_{ij})$, which is,
\begin{equation}
\sum_{i}p_i \mathcal{N}(|{\phi}_{i} \rangle)=2p\sqrt{\mathcal{A}(1-\mathcal{A})},
\label{ave 1...n}
\end{equation}
where $\mathcal{A}=1-\sum_{j=1}^{d-1}|a_{1j}|^2$.

Thus, by the definition of CREN, we have
\begin{align}
{\mathcal{N}_c}(\rho_{A_1(A_2 \cdots A_n)})=&\min \sum_{i}p_i \mathcal{N}(|{\phi}_{i} \rangle)\nonumber\\
=&2p\sqrt{\mathcal{A}(1-\mathcal{A})},
\label{CRENA1..An}
\end{align}
where the minimum is taken over all possible pure state
decompositions of $\rho_{A_1(A_2 \cdots A_n)}=\sum_{i}p_i
|{\phi}_i \rangle \langle {\phi}_i |$.

Furthermore, it can be seen from Eq.~(\ref{ave 1...n}) that this average value is also invariant under the degree
of coherency $\lambda$. In other words, no matter how much amount of
decoherence in Eq.~(\ref{Kraus}) happens to the state in  Eq.~(\ref{superposition}), its entanglement is preserved.

Now, for ${\mathcal{N}_c}_{A_1A_i}$ with $i=2,..., n$, let us
first consider the case when $i=2$, whereas all the other cases are
analogously following. By tracing over all subsystems except $A_1$
and $A_2$ from $\rho_{A_1 \cdots A_n}$, we get
\begin{widetext}
\begin{align}
\rho_{A_1A_2}=&p\sum_{i,j=1}^{d-1}[a_{1i}a^*_{1j}\ket {i0}\bra {j0}+a_{1i}a^*_{2j}\ket {i0}\bra {0j}+a_{2i}a^*_{1j}\ket {0i}\bra {j0}
%\nonumber\\
+a_{2i}a^*_{2j}\ket {0i}\bra {0j}]\nonumber\\
&+(\mathcal{A}_{2}+1-p)\ket {00}\bra {00}\nonumber\\
&+\lambda \sqrt{p(1-p)}\sum_{k=1}^{d-1}[(a_{1k}\ket {k0}+a_{2k}\ket {0k})\bra {00}
%\nonumber\\
+a^*_{1k}\ket {00}(\bra {k0}+a^*_{2k}\bra {0k})],
%\nonumber\\
%=& |\tilde{\psi}_1\rangle \langle\tilde{\psi}_1|+|\tilde{\psi}_2\rangle \langle\tilde{\psi}_2|,
\label{partially coherent A1 A2 matrix}
\end{align}
\end{widetext}
with $\mathcal{A}_{2}=1-\sum_{j=1}^{d-1}(|a_{1j}|^2+|a_{2j}|^2)$.

Let us consider two unnormalized states
\begin{align}
|\tilde{\psi}_1\rangle=&\sqrt{p}\sum_{i=1}^{d-1}(a_{1i}\ket {i0}+a_{2i}\ket
{0i})+\lambda\sqrt{1-p}\ket {00},\nonumber\\
|\tilde{\psi}_2\rangle=&\sqrt{\mathcal{A}_{2}+(1-p)(1-\lambda^2)}\ket {00},
\end{align}
then we have
\begin{equation}
\rho_{A_1A_2}=|\tilde{\psi}_1\rangle \langle \tilde{\psi}_1|+|\tilde{\psi}_2\rangle \langle \tilde{\psi}_2|.
\end{equation}
Thus all possible pure states in an arbitrary pure state
decomposition of $\rho_{A_1 A_2}$ of size $r$ can be realized as a
linear combination of $|\tilde{\psi}_1\rangle$ and
$|\tilde{\psi}_2\rangle$ by choosing an $r \times r$ unitary
matrix. Again, by using a similar method to the case of
$\rho_{A_1 \cdots A_n}$, it can been shown that the average
negativity of $\rho_{A_1 A_2}$ is invariant under the choice of
pure state decomposition, which is,
\begin{equation}
{\mathcal{N}_c}_{A_1A_2}=2p\sqrt{(1-\mathcal{A})(\mathcal{A}-\mathcal{A}_{2})}.
\label{pcoherentN12}
\end{equation}
Furthermore, rather surprisingly, this average value is also
invariant under the degree of coherency. In other words, no matter
how much amount of decoherence in Eq.~(\ref{Kraus}) happens, it does not even affect the entanglement
between the subsystems $A_1$ and $A_2$.

Similarly, we can have,
\begin{equation}
{\mathcal{N}_c}_{A_1A_i}=2p\sqrt{(1-\mathcal{A})(\mathcal{A}-\mathcal{A}_{i})},~i=3,..., n
\label{pcoherentN1i}
\end{equation}
with $\mathcal{A}_{i}=1-\sum_{j=1}^{d-1}(|a_{1j}|^2+|a_{ij}|^2)$, and thus,
\begin{align}
\sum_{i=2}^{n}{\mathcal{N}_c}^2_{A_1A_i}=&\sum_{i=2}^{n}{4p^2(1-\mathcal{A})(\mathcal{A}-\mathcal{A}_{i})}\nonumber\\
=& 4p^2\mathcal{A}(1-\mathcal{A})\nonumber\\
=& {\mathcal{N}_c}^2_{A_1(A_2\cdots A_n)}.
\label{saturation}
\end{align}

In other words, we have obtained a saturation of the CREN monogamy relation for an $n$-qudit state
in Eq.~(\ref{partially coherent density matrix}), and this saturation does not depend on the choice of coherency $\lambda$.

For any arbitrary partition $P=\{P_1, \cdots, P_m \}$ of the set of subsystems, it was shown that an $n$-qudit
generalized W-class state can be also considered as an $m$-partite generalized W-class state~\cite{ks}, that is,
\begin{widetext}
\begin{align}
|W_n^d\rangle_{A_1\cdots A_n}=&\sum_{i=1}^{d-1}(a_{1i}{\ket {i\cdots 0}} +\cdots +a_{ni}{\ket {0\cdots i}})\nonumber\\
=&\sum_{i=1}^{d-1} |\tilde{x}_{1i}\rangle_{P_1}\otimes\cdots\otimes|\vec{0}\rangle_{P_m}
           +\cdots +|\vec{0}\rangle_{P_1}\otimes\cdots\otimes|\tilde{x}_{mi}\rangle_{P_m}\nonumber\\
=&\sum_{i=1}^{d-1} \sqrt{q_{1i}}\ket{i}_{P_1}\otimes\cdots\otimes\ket{0}_{P_m}
            +\cdots+\sqrt{q_{mi}} \ket{0}_{P_1}\otimes\cdots\otimes\ket{i}_{P_m}\nonumber\\
=&|W_m^d\rangle_{P_1\cdots P_m},
\label{n-mequivalence}
\end{align}
\end{widetext}
where
\begin{align}
|\tilde{x}_{si}\rangle_{P_s}=a_{(n_1+\cdots+n_{s-1}+1)i}&\ket{i\cdots 0}_{P_s}+\nonumber\\
&\cdots+a_{(n_1+\cdots+n_s) i}\ket{0\cdots i}_{P_s}
\end{align}
and
\begin{equation}
\sqrt{q_{si}}\ket{x_{si}}_{P_s} = |\tilde{x}_{si}\rangle_{P_s},~|\vec{0}\rangle_{P_s} = \ket{0\cdots0}_{P_s}
\end{equation}
with renaming $\ket{x_{si}}_{P_s}=\ket{i}_{P_s}$ and $|\vec{0}\rangle_{P_s}= \ket{0}_{P_s}$ for $s \in \{1, \ldots,m \}$.

Therefore Eq.~(\ref{partially coherent density matrix}) can also be considered to be a partially coherent superposition of an $m$-partite generalized W-class state and the vacuum, $\ket{0}_{P_1 \cdots P_m}$, and thus the result in (\ref{saturation}) is also true for any arbitrary partition of the set of subsystems.

Not only for the case of multi-qubit systems and the counterexamples in Sec.~\ref{Sec: CREN Monogamy of Entanglement},
CREN also shows a strong monogamy relation of entanglement for a class of $n$-qudit states in a partially coherent mixture
of a generalized W-class state and the vacuum. Thus, the CREN version of MoE is a strong
conjecture for qudit systems with no obvious counterexamples.
%%%%%%%%%%%%%%%%%%%%%%%%%%%%%%%%%%%%%%%%%%%%%%%%%%%%%%%%%%%%%%%%%%%%%%%%%%
%                                                                        %
%                           Conclusions                                  %
%                                                                        %
%%%%%%%%%%%%%%%%%%%%%%%%%%%%%%%%%%%%%%%%%%%%%%%%%%%%%%%%%%%%%%%%%%%%%%%%%%
\section{Conclusions}\label{Sec: Conclusion}

The study of higher-dimensional quantum systems is, undoubtedly,
important and even necessary to quantum information science for
various kind of reasons. First, qudits for $d>2$ are preferred in
some physical systems such as in quantum key distribution where
the use of qutrits increases coding density and provide stronger
security compared to qubits~\cite{gjvw}. In fault-tolerant quantum
computation as well as on quantum error-correcting codes (QECCs),
many studies are concentrated on the case of binary QECCs in a
two-dimensional Hilbert space, whereas generalizations of proofs
are often non-trivial when $d>2$.

However, as both qubit and qudit systems occur in the natural
world, there is no reason to assume that a theoretical result
should hold solely for two-dimensional systems. If an important
result (e.g. monogamy of entanglement) is shown to be true for the
case $d=2$, then this would suggest that a lot of effort should be
directed towards qudit systems, as the case for $d > 2$ could be
fundamentally different from the case $d=2$. For example a recent
result~\cite{cccz} shows that for subsystem stabilizer codes in
$d$ dimensional Hilbert space, a universal set of transversal
gates cannot exist for even one encoded qudit, for any dimension
$d$, which is known as {\em no-go theorem} for the universal set
of transversal gates in QECC.

The extension of the multipartite entanglement analysis,
especially the monogamy relation from qubit to qudit case is far
more than trivial.
The entanglement properties in higher-dimensional systems are hardly known so far,
and thus any fundamental step of the challenges to the richness of entanglement
studies for system of higher-dimensions and multipartite systems
would be fruitful and even necessary to understand the whole
picture of quantum entanglement.

In this paper, we have proposed CREN as a powerful alternative for
MoE in higher-dimensional quantum systems.
We have shown that any monogamy inequality of entanglement for multi-qubit systems can be
rephrased in terms of CREN. Furthermore we have pointed out the possibility
of CREN MoE in higher-dimensional quantum systems by showing that all the
counterexamples for the CKW inequality so far in higher-dimensional
quantum systems still have a monogamy inequality in terms of CREN,
as well as no trivial counterexamples for CREN MoE so far.
This task is one of the key challenges in finding a bipartite entanglement measure that
meets our three criteria for qubits and for higher-dimensional systems.

For the studies of CREN MoE in higher-dimensional
quantum systems, we have proposed a class of quantum
states in $n$-qudit systems that are in a partially coherent
superpositions of a generalized W-class state and the vacuum.
The CREN monogamy relation for the proposed
class has been shown to be true and it also holds with respect to
any arbitrary partition of the subsystems.

Thus CREN is a good candidate for the general monogamy relation of multipartite entanglement,
and it shows a strong evidence of its possibility even for the case of mixed states in higher-dimensional
systems.
We believe that the analysis of CREN MoE derived here will give a full and rich
reference for the study of MoE in higher-dimensional quantum systems, which is one of the
most important and necessary tasks in study of quantum entanglement.

%Dealing with pure states, as well as their entanglement, are
%extremely subtle and hard in real world, and this is mainly due to
%the decoherence, which tends to corrupt the states. However the
%proposed class of the states here has been shown that their
%entanglements (measured by CREN) are not affected by the degree of
%coherency in the superposition.
\section*{Acknowledgments}
This work is supported by the Korea Research Foundation Grant funded by the Korean Government (MOEHRD) (KRF-2007-357-C00008) and
Alberta's informatics Circle of Research Excellence (iCORE).
BCS is supported by CIFAR Associateship and MITACS.

%%%%%%%%%%%%%%%%%%%%%%%%%%%%%%%%%%%%%%%%%%%%%%%%%%%%%%%%%%%%%%%%%%%%%%%%

\end{document}